\title{An Overflow/Underflow-Free Fixed-Point Bit-Width \\Optimization Method for OS-ELM Digital Circuit}
\author{
  Mineto Tsukada\\
  Keio University\\
  3-14-1 Hiyoshi, Kohoku-ku, Yokohama, Japan\\
  \texttt{tsukada@arc.ics.keio.ac.jp}\\
  \And
  Hiroki Matsutani \\
  Keio University\\
  3-14-1 Hiyoshi, Kohoku-ku, Yokohama, Japan\\
  \texttt{matutani@arc.ics.keio.ac.jp} \\
}
\theoremstyle{definition}
\newtheorem{theorem}{Theorem}
\begin{document}
\maketitle

\begin{abstract}
Currently there has been increasing demand for real-time training on resource-limited IoT devices such as smart sensors,
which realizes standalone online adaptation for streaming data without data transfers to remote servers.
OS-ELM (Online Sequential Extreme Learning Machine) has been one of promising neural-network-based online algorithms for on-chip learning
because it can perform online training at low computational cost and is easy to implement as a digital circuit.
Existing OS-ELM digital circuits employ fixed-point data format and the bit-widths are often manually tuned, however, this may cause overflow or underflow
which can lead to unexpected behavior of the circuit.
For on-chip learning systems, an overflow/underflow-free design has a great impact since
online training is continuously performed and the intervals of intermediate variables will dynamically change as time goes by.
In this paper, we propose an overflow/underflow-free bit-width optimization method for fixed-point digital circuit of OS-ELM.
Experimental results show that our method realizes overflow/underflow-free OS-ELM digital circuits
with 1.0x - 1.5x more area cost compared to an ordinary simulation-based optimization method
where overflow or underflow can happen.
\end{abstract}

% contents

% intro.tex

\section{Introduction}\label{sec:intro}
Currently there has been increasing demand for real-time training on resource-limited IoT devices (e.g. smart sensors and micro computers),
which realizes standalone online adaptation for streaming data without transferring data to remote servers,
and avoids additional power consumption for communication \cite{os_elm_fpga_tsukada_tc}.
OS-ELM (Online Sequential Extreme Learning Machine) \cite{os_elm} has been one of promising neural-network-based online algorithms for on-chip learning
because it can perform online training at low computational cost and is easy to implement as a digital circuit.
Several papers have proposed design methodologies and implementations of OS-ELM digital circuits
and shown that OS-ELM can be implemented in a small-size FPGA and still be able to perform online training in less than one millisecond \cite{os_elm_fpga_tsukada_tc, os_elm_fpga_tsukada, os_elm_fpga_villora, os_elm_fpga_safaei}.

Existing OS-ELM digital circuits often employ fixed-point data format and the bit-widths are manually tuned according
to the requirements (e.g. resource and timing constraints), however, this may cause overflow or underflow
which can lead to unexpected behavior of the circuit.
A lot of works have proposed bit-width optimization methods that analytically derive the lower and upper bounds of intermediate variables
and automatically optimize the fixed-point data format, ensuring that overflow/underflow never happens \cite{aawlo_lee, smtwlo_kinsman, hrwlo_boland}.
For on-chip learning systems, an overflow/underflow-free design has a significant impact because
online training is continuously performed and the intervals of intermediate variables will dynamically change as time goes by.

In this paper we propose an overflow/underflow-free bit-width optimization method for fixed-point OS-ELM digital circuits.
This work makes the following contributions.
\begin{itemize}
    \item We propose an interval analysis method for OS-ELM using affine arithmetic \cite{aa},
    one of the most widely-used interval arithmetic models.
    Affine arithmetic has been used in a lot of existing works for determining optimal integer bit-widths that never cause overflow and underflow.
    \item In affine arithmetic, division is
    defined only if the denominator does not include zero;
    otherwise the algorithm cannot be represented in affine arithmetic.
    OS-ELM's training algorithm contains one division and we analytically prove that the denominator does not include zero.
    Based on this proof, we also propose a mathematical trick to safely represent OS-ELM in affine arithmetic.
    \item Affine arithmetic can represent only fixed-length computation graphs and
    unbounded loops are not supported in affine arithmetic.
    However, OS-ELM's training algorithm is an iterative algorithm
    where current outputs are used as the next inputs endlessly.
    We propose an empirical solution for this problem based on simulation results, and
    verify its effectiveness in the evaluation section.
    \item We evaluate the performance of our interval analysis method,
    using an fixed-point IP core called OS-ELM Core to demonstrate the practicality of our method.
\end{itemize}

The rest of this paper is organized as follows;
Section \ref{sec:prelim} gives a brief introduction of basic technologies behind this work.
Our interval analysis method is proposed in Section \ref{sec:method}.
Section \ref{sec:impl} briefly describes the design of OS-ELM Core.
The proposed interval analysis method is evaluated in Section \ref{sec:eval}.
Section \ref{sec:conc} concludes this paper.
Please refer to Table \ref{tb:notation} and Table \ref{tb:desc}
for the notation rules and the description description of special variables that frequently appear in this paper.
\section{Preliminaries}\label{sec:prelim}
\subsection{ELM}\label{sec:prelim.elm}
We first introduce ELM (Extreme Learning Machine) \cite{elm} prior to OS-ELM.
ELM illustrated in Figure \ref{fig:elm} is a neural-network-based model that consists of an input layer, one hidden layer, and an output layer.
If an $n$-dimensional input $\bm{X} \in \mathbb{R}^{k \times n}$ of batch size = $k$ is given,
the $m$-dimensional prediction output $\bm{Y} \in \mathbb{R}^{k \times m}$ can be computed in the following formula.
\begin{equation}\label{eq:elm.predict}
    \bm{Y} = \mathrm{G}(\bm{X} \cdot \bm{\alpha} + \bm{b})\bm{\beta}
\end{equation}
\begin{figure}[b]
    \centering
    \includegraphics[width=85.0mm]{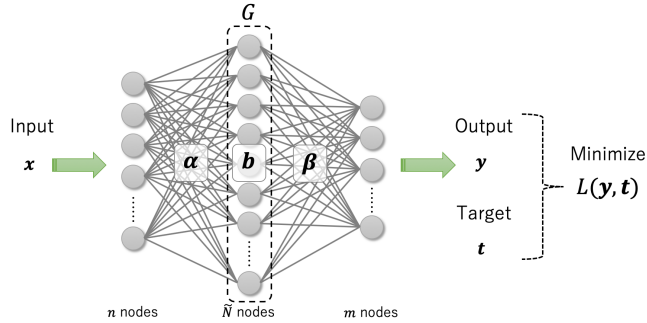}
    \caption{
        Extreme learning machine.
        $n$/$\tilde{N}$/$m$ represents the number of input/hidden/output nodes.
        $\bm{\alpha} \in \mathbb{R}^{n \times \tilde{N}}$ is a weight matrix connecting
        the input and the hidden layers.
        $\bm{\beta} \in \mathbb{R}^{\tilde{N} \times m}$ is another weight matrix connecting the hidden and output layers.
        $\bm{b} \in \mathbb{R}^{1 \times \tilde{N}}$ is a bias vector of the hidden layer,
        and $\mathrm{G}$ is an activation function applied to the hidden layer output.
        $\bm{\alpha}$ and $\bm{b}$ are non-trainable constant parameters initialized with random values.
    }
    \label{fig:elm}
\end{figure}

ELM uses a finite number of input-target pairs for training.
Suppose an ELM model can approximate $N$ input-target pairs $\{\bm{X} \in \mathbb{R}^{N \times n}, \bm{T} \in \mathbb{R}^{N \times m}\}$ with zero error,
it implies there exists $\bm{\beta}$ that satisfies the following equation.
\begin{equation}\label{eq:elm_zero_error}
    \mathrm{G}(\bm{X} \cdot \bm{\alpha} + \bm{b})\bm{\beta} = \bm{T}
\end{equation}
Let $\bm{H} \equiv \mathrm{G}(\bm{X} \cdot \bm{\alpha} + \bm{b})$ then
the optimal solution $\bm{\beta}^{\ast}$ is derived with the following formula.
\begin{equation}\label{eq:elm_train}
    \bm{\beta}^{\ast} = \bm{H}^{\dagger}\bm{T}
\end{equation}
$\bm{H}^{\dagger}$ is the pseudo inverse of $\bm{H}$.
The whole training process finishes by replacing $\bm{\beta}$ with $\bm{\beta}^{\ast}$.
ELM takes one-shot optimization approach unlike backpropagation-based neural-networks (BP-NNs),
which makes the whole training process faster.
ELM is known to finish optimization process faster than BP-NNs \cite{elm}.

\subsection{OS-ELM}\label{sec:prelim.os_elm}
ELM is a batch learning algorithm; ELM needs to re-train with the whole training dataset, including training samples already learned in the past,
in order to learn new training samples.
OS-ELM \cite{os_elm} is an ELM variant that can perform online learning instead of batch learning.
Suppose the $i$th training samples $\{\bm{X}_i \in \mathbb{R}^{k_i \times n}, \bm{T}_i \in \mathbb{R}^{k_i \times m}\}$ of batch size $= k_i$ is given,
OS-ELM computes the $i$th optimal solution $\bm{\beta}_i$ in the following formula.
\begin{equation}\label{eq:os_elm.train.batch}
    \begin{split}
        \bm{P}_i &= \bm{P}_{i-1} - \bm{P}_{i-1}\bm{H}_i^T(\bm{I} + \bm{H}_i\bm{P}_{i-1}\bm{H}_i^T)^{-1}\bm{H}_i\bm{P}_{i-1}\\
        \bm{\beta}_i &= \bm{\beta}_{i-1} + \bm{P}_i\bm{H}_i^T(\bm{T}_i - \bm{H}_i\bm{\beta}_{i-1}),
    \end{split}
\end{equation}
where $\bm{H}_i \equiv G(\bm{X}_i \cdot \bm{\alpha} + \bm{b})$.
$\bm{P}_0$ and $\bm{\beta}_0$ are computed as follows.
\begin{equation}\label{eq:os_elm.init}
    \begin{split}
        \bm{P}_0 &= (\bm{H}_0^T\bm{H}_0)^{-1}\\
        \bm{\beta}_0 &= \bm{P}_0\bm{H}_0^T\bm{T}_0
    \end{split}
\end{equation}
Note that OS-ELM and ELM produce the same solution as long as the training dataset is the same.

Specially, when $k_{i} = 1$ Equation \ref{eq:os_elm.train.batch} can be simplified into
\begin{equation}\label{eq:os_elm.train}
    \begin{split}
        \bm{P}_i &= \bm{P}_{i-1} - \frac{\bm{P}_{i-1}\bm{h}_i^T\bm{h}_i\bm{P}_{i-1}}{1+\bm{h}_i\bm{P}_{i-1}\bm{h}_i^T} \\
        \bm{\beta}_i &= \bm{\beta}_{i-1} + \bm{P}_i\bm{h}_i^T(\bm{t}_i - \bm{h}_i\bm{\beta}_{i-1}),
    \end{split}
\end{equation}
where $\bm{h}_i \equiv G(\bm{x}_i \cdot \bm{\alpha} + \bm{b})$.
Note that $\bm{x}_i$/$\bm{t}_i$/$\bm{h}_i$ is a special case of $\bm{X}_i$/$\bm{T}_i$/$\bm{H}_{i}$ when $k_{i} = 1$.
Equation \ref{eq:os_elm.train} is more costless than Equation \ref{eq:os_elm.train.batch} in terms of computational complexity since
a costly matrix inverse $(\bm{I} + \bm{H}_i\bm{P}_{i-1}\bm{H}_i^T)^{-1}$ has been replaced with a simple reciprocal operation $\frac{1}{1+\bm{h}_i\bm{P}_{i-1}\bm{h}_i^T}$ \cite{os_elm_fpga_tsukada_tc}.
In this work we refer to Equation \ref{eq:os_elm.train} as ``training algorithm'' of OS-ELM.
Equation \ref{eq:os_elm.init} is referred to as ``initialization algorithm''.

The prediction algorithm is below.
\begin{equation}\label{eq:os_elm.predict}
    \bm{y} = \mathrm{G}(\bm{x} \cdot \bm{\alpha} + \bm{b})\bm{\beta},
\end{equation}
where $\bm{y}$ is a special case of $\bm{Y}$ when the batch size is equal to 1.
We refer to Equation \ref{eq:os_elm.predict} as ``prediction algorithm'' of OS-ELM.

\subsection{Interval Analysis}\label{sec:prelim.ra}
To realize an overflow/underflow-free fixed-point design you need to know the interval of each variable
and allocate sufficient integer bits that never cause overflow and underflow.
Existing interval analysis methods for fixed-point design are categorized into a (1) dynamic method or a (2) static method \cite{fixed_point_refine}.
Dynamic methods \cite{wlo_dynamic_cmar, wlo_dynamic_gaffar, wlo_dynamic_keding, wlo_dynamic_shi} often take a simulation-based approach with tons of test inputs.
It is known that dynamic methods often produce a better result close to the true interval compared to static methods,
although they tend to take a long time due to exhaustive search and
may encounter overflow or underflow if unseen inputs are found in runtime.
Static methods \cite{aawlo_lee, aawlo_cong, aawlo_vakili, smtwlo_kinsman, hrwlo_boland}, on the other hand, take a more analytical approach;
they often involve solving equations and deriving upper and lower bounds of each variable without test inputs.
Static methods produce a more conservative result (i.e. a wider interval) compared to dynamic methods, although the result is analytically guaranteed.
In this work we employ a static method for interval analysis as
the goal is to realize an overflow/underflow-free fixed-point OS-ELM digital circuit with analytical guarantee.

Interval arithmetic (IA) \cite{ia} is one of the oldest static interval analysis methods.
In IA every variable is represented as an interval $[x_1, x_2]$
where $x_1$ and $x_2$ are the lower and upper bounds of the variable.
Basic operations $\{+, -, \times\}$ are defined as follows;
\begin{equation}\label{eq:ia}
    \begin{split}
        [x_1, x_2] + [y_1, y_2] &= [x_1 + y_1, x_2 + y_2]\\
        [x_1, x_2] - [y_1, y_2] &= [x_1 - y_2, x_2 - y_1]\\
        [x_1, x_2] \times [y_1, y_2] &= [\\
            &\mathrm{min}(x_1y_1, x_1y_2, x_2y_1, x_2y_2),\\
            &\mathrm{max}(x_1y_1, x_1y_2, x_2y_1, x_2y_2)]
    \end{split}
\end{equation}
IA guarantees intermediate intervals as long as input intervals are known.
However, IA suffers from the \textit{dependency problem};
for example, $x - x$ where $x \in [x_1, x_2]\;(x_1 < x_2)$ should be 0
in ordinary algebra, although the result in IA is $[x_1 - x_2, x_2 - x_1]$,
a wider interval than the true tightest range $[0, 0]$, which
makes subsequent intervals get wider and wider.
The cause of this problem is that IA ignores
correlation of variables; $x - x$ is treated a self-subtraction in ordinary algebra
but it is regarded as a subtraction between independent intervals in IA.

Affine arithmetic (AA) \cite{aa} is a refinement of IA proposed by Stolfi \textit{et. al}.
AA keeps track of correlation of variables and is known to
obtain tighter bounds close to the true range compared to IA.
AA has been applied into a lot of fixed-point/floating-point bit-width optimization systems \cite{aawlo_lee, aawlo_fang, aawlo_cong, aawlo_pu}
and still widely used in recent works \cite{aawlo_vakili, aawlo_wang, aawlo_bellal}.
We use AA throughout this work.

\begin{figure}[b]
    \centering
    \includegraphics[width=45.0mm]{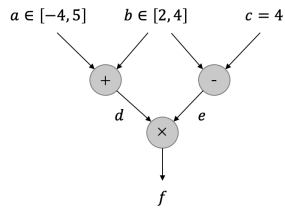}
    \caption{
        A simple tutorial of AA.
        In AA all input intervals ($a, b, c$ in this tutorial) must be known.
        Affine forms of $a, b, c, d, e, f$ are computed as follows; $\hat{a} = 0.5 + 4.5\epsilon_{a}\:,\hat{b} = 3.0 + \epsilon_{b}\;,\hat{c} = 4.0\;,\hat{d} = 3.5 + 4.5\epsilon_{a} + \epsilon_{b}\;,\hat{e} = -1.0 + \epsilon_{b}\;,\hat{f} = -3.5 - 4.5\epsilon_{a} + 2.5\epsilon_{b} + 5.5\epsilon_{f}$.
        We can derive $\mathrm{interval}(\hat{d}) = [-2.0, 9.0]$, $\mathrm{interval}(\hat{e}) = [-2.0, 0.0]$, and $\mathrm{interval}(\hat{f}) = [-16.0, 9.0]$ from Equation \ref{eq:affine.interval}.
    }
    \label{fig:graph_example}
\end{figure}
\subsection{Affine Arithmetic}\label{sec:prelim.aa}
In AA the interval of variable $x$ is represented in an \textit{affine form } $\hat{x}$ given by;
\begin{equation}\label{eq:affine}
    \hat{x} = x_0 + x_1\epsilon_1 + x_2\epsilon_2 + \cdots + x_{n}\epsilon_{n},
\end{equation}
where $\epsilon_{i} \in [-1, 1]$. $x_{i}$ is a coefficient and $\epsilon_{i}$ is an \textit{uncertainty variable} which takes $[-1, 1]$;
an affine form is a linear combination of uncertainty variables.

The interval of $\hat{x}$ can be computed as below.
\begin{equation}\label{eq:affine.interval}
    \begin{split}
        \mathrm{interval}(\hat{x}) &= [\mathrm{inf}(\hat{x}), \mathrm{sup}(\hat{x})] \\
        \mathrm{inf}(\hat{x}) &= x_0 - \sum_{i}{|x_{i}|}\\
        \mathrm{sup}(\hat{x}) &= x_0 + \sum_{i}{|x_{i}|}
    \end{split}
\end{equation}
$\mathrm{inf}(\hat{x})$ computes the lower bound of $\hat{x}$ and $\mathrm{sup}(\hat{x})$ is the upper bound.
Conversely a variable that ranges $[a, b]$ can be converted into an affine form $\hat{x} = x_0 + x_1\epsilon_{1}$ with
\begin{equation}
    x_0 = \frac{b + a}{2},\; x_1 = \frac{b - a}{2}.
\end{equation}

Addition/subtraction between affine forms $\hat{x}$ and $\hat{y}$ is simply defined as $\hat{x} \pm \hat{y} = (x_0 \pm y_0) + \sum_{i}{(x_i \pm y_i)\epsilon_i}$.
However, multiplication $\hat{x} \ast \hat{y}$ is a little bit complicated.
\begin{equation}
    \begin{split}
        \hat{x} \ast \hat{y} &= x_0y_0 + \sum_{i}{(x_0y_i + y_0x_i)\epsilon_i} + Q\\
        Q &= \sum_{i}{(x_i\epsilon_i)}\sum_{i}{(y_i\epsilon_i)}
    \end{split}
\end{equation}
Note that $Q$ is not an affine form (i.e. $Q$ is not a linear combination of $\epsilon_{i}$) and it needs approximation to become an affine form.
A conservative approximation shown below is often taken \cite{aawlo_lee, aawlo_cong, aawlo_vakili}.
\begin{equation}
    Q \approx uv\epsilon_{\ast},\; u = \sum_{i}{|x_{i}|},\; v = \sum_{i}{|y_{i}|},
\end{equation}
where $\epsilon_{\ast} \in [-1, 1]$ is a new uncertainty variable.
Note that $uv\epsilon_{\ast} \ge \sum_{i}{(x_i\epsilon_i)}\sum_{i}{(y_i\epsilon_i)}$.
See Figure \ref{fig:graph_example} for a simple tutorial of AA.

Division $\hat{z} = \frac{\hat{x}}{\hat{y}}$ is often separated into $\hat{x} \ast \frac{1}{\hat{y}}$.
There are mainly two approximation methods to compute $\frac{1}{\hat{y}}$: (1) the min-max approximation and
(2) the chebyshev approximation. Here we show the definition of $\frac{1}{\hat{y}}$ with the min-max approximation.
\begin{equation}
    \begin{split}
    p &= \left\{
        \begin{array}{ll}
            -\frac{1}{b^2} & (\mathrm{if}\;b > a > 0) \\
            -\frac{1}{a^2} & (\mathrm{if}\;0 > b > a)
        \end{array}\right. \\
    q &= \frac{(a + b)^2}{2ab^2}\;,d = \frac{(a - b)^2}{2ab^2} \\
    \frac{1}{\hat{y}} &= (p \cdot y_0 + q) + \sum_{i}{p \cdot (y_i\epsilon_i)} + d\epsilon_{\ast},
    \end{split}
\end{equation}
where $a = \mathrm{inf}(\hat{y})$ and $b = \mathrm{sup}(\hat{y})$.
Note that $\frac{1}{\hat{y}}$ is defined only if $b > a > 0$ or $0 > b > a$.
The denominator $\hat{y}$ must not include zero.

\subsection{Determination of Integer Bit-Width}\label{sec:prelim.optim}
Suppose we have an affine form $\hat{x}$, the minimum number of integer bits that never cause overflow and underflow is computed by;
\begin{equation}\label{eq:optim}
    \begin{split}
    IB &= \lceil \log_2 (\mathrm{max}(|\mathrm{inf}(\hat{x})|, |\mathrm{sup}(\hat{x})|) + 1) + \alpha, \\
    \alpha &= \left\{
        \begin{array}{ll}
            1 & (\mathrm{if}\;\mathrm{signed}) \\
            0 & \mathrm{else.}
        \end{array}\right.
    \end{split}
\end{equation}
$IB$ represents the optimal integer bit-width.

\begin{algorithm}[t]
    \caption{
        $\mathrm{T}(\bm{x}_i, \bm{t}_i, \bm{\alpha}, \bm{b}, \bm{P}_{i-1}, \bm{\beta}_{i-1}) \mapsto \{\bm{P}_i, \bm{\beta}_i\}$ ($1 \le i \le N$).
    }
    \begin{algorithmic}[1]\label{alg:graph.t}
        \REQUIRE $\bm{x}_i, \bm{t}_i, \bm{\alpha}, \bm{b}, \bm{P}_{i-1}, \bm{\beta}_{i-1}$
        \ENSURE $\bm{h}_i = \bm{x}_i \cdot \bm{\alpha} + \bm{b}$,\\$\bm{P}_i = \bm{P}_{i-1} - \frac{\bm{P}_{i-1}\bm{h}_i^T\bm{h}_i\bm{P}_{i-1}}{1+\bm{h}_i\bm{P}_{i-1}\bm{h}_i^T}$,\\$\bm{\beta}_i = \bm{\beta}_{i-1} + \bm{P}_i\bm{h}_i^T(\bm{t}_i - \bm{h}_i\bm{\beta}_{i-1})$
        \STATE $\bm{e}_i \leftarrow \bm{x}_i \cdot \bm{\alpha}$ 
        \STATE $\bm{h}_i \leftarrow \bm{e}_i + \bm{b}$ %\COMMENT{$\bm{x}_i\bm{\alpha} + \bm{b}$}
        \STATE $\bm{\gamma}^{(1)}_i \leftarrow \bm{P}_{i-1} \cdot \bm{h}_i^T$
        \STATE $\bm{\gamma}^{(2)}_i \leftarrow \bm{h}_i \cdot \bm{P}_{i-1}$
        \STATE $\bm{\gamma}^{(3)}_i \leftarrow \bm{\gamma}^{(1)}_i \cdot \bm{\gamma}^{(3)}_i$ %\COMMENT{$\bm{P}_{i-1}\bm{h}_i^T \cdot \bm{h}_i\bm{P}_{i-1}$}
        \STATE $\gamma^{(4)}_i \leftarrow \bm{\gamma}^{(2)}_i \cdot \bm{h}_i^T$ %\COMMENT{$\bm{h}_i\bm{P}_{i-1} \cdot \bm{h}_i^T$}
        \STATE $\gamma^{(5)}_i \leftarrow \gamma^{(4)}_i + 1$ %\COMMENT{$\bm{h}_i\bm{P}_{i-1}\bm{h}_i^T + 1$}
        \STATE $\bm{\gamma}^{(6)}_i \leftarrow \bm{\gamma}^{(3)}_i / \gamma^{(5)}_i$ %\COMMENT{$\bm{P}_{i-1}\bm{h}_i^T\bm{h}_i\bm{P}_{i-1} / (\bm{h}_i\bm{P}_{i-1}\bm{h}_i^T + 1)$}
        \STATE $\bm{P}_i \leftarrow \bm{P}_i - \bm{\gamma}^{(6)}_i$ %\COMMENT{$\bm{P}_{i-1} - \bm{P}_{i-1}\bm{h}_i^T\bm{h}_i\bm{P}_{i-1} / (\bm{h}_i\bm{P}_{i-1}\bm{h}_i^T + 1)$}
        \STATE $\bm{\gamma}^{(7)}_i \leftarrow \bm{P}_i \cdot \bm{h}_i^T$
        \STATE $\bm{\gamma}^{(8)}_i \leftarrow \bm{h}_i \cdot \bm{\beta}_{i-1}$
        \STATE $\bm{\gamma}^{(9)}_i \leftarrow \bm{t}_i - \bm{\gamma}^{(8)}_i$ %\COMMENT{$\bm{t}_i - \bm{h}_i\bm{\beta}_{i-1}$}
        \STATE $\bm{\gamma}^{(10)}_i \leftarrow \bm{\gamma}^{(7)}_i \cdot \bm{\gamma}^{(9)}_i$ %\COMMENT{$\bm{P}_i\bm{h}_i^T \cdot (\bm{t}_i - \bm{h}_i\bm{\beta}_{i-1})$}
        \STATE $\bm{\beta}_i \leftarrow \bm{\beta}_{i-1} + \bm{\gamma}^{(10)}_i$ %\COMMENT{$\bm{\beta}_{i-1} + \bm{P}_i\bm{h}_i^T(\bm{t}_i - \bm{h}_i\bm{\beta}_{i-1})$}
        \RETURN $\{\bm{P}_i, \bm{\beta}_i\}$
    \end{algorithmic}
\end{algorithm}
\begin{algorithm}[t]
    \caption{$\mathrm{P}(\bm{x}, \bm{\alpha}, \bm{b}, \bm{\beta}) \mapsto \bm{y}$}
    \begin{algorithmic}[1]\label{alg:graph.p}
        \REQUIRE $\bm{x}, \bm{\alpha}, \bm{b}, \bm{\beta}$
        \ENSURE $\bm{y} = (\bm{x} \cdot \bm{\alpha} + \bm{b})\bm{\beta}$
        \STATE $\bm{e} \leftarrow \bm{x} \cdot \bm{\alpha}$
        \STATE $\bm{h} \leftarrow \bm{e} + \bm{b}$ %\COMMENT{$\bm{x}\bm{\alpha} + \bm{b}$}
        \STATE $\bm{y} \leftarrow \bm{h} \cdot \bm{\beta}$ %\COMMENT{$(\bm{x}\bm{\alpha} + \bm{b}) \cdot \bm{\beta}$}
        \RETURN $\bm{y}$
    \end{algorithmic}
\end{algorithm}

\section{AA-Based Interval Analysis for OS-ELM}\label{sec:method}
In this section we propose the AA-based interval analysis method for OS-ELM.
The process is two-fold: \textcircled{\scriptsize{1}} Build the computation graph equivalent to OS-ELM.
\textcircled{\scriptsize{2}} Compute the affine form and interval for every variable existing in OS-ELM, using Equation \ref{eq:affine.interval}.
Figure \ref{fig:graph} shows computation graphs for OS-ELM.
``Training graph'' corresponds to the training algorithm (Equation \ref{eq:os_elm.train}), and ``prediction graph'' corresponds to the prediction algorithm (Equation \ref{eq:os_elm.predict}).
\begin{figure}[t]
    \centering
    \includegraphics[width=95.0mm]{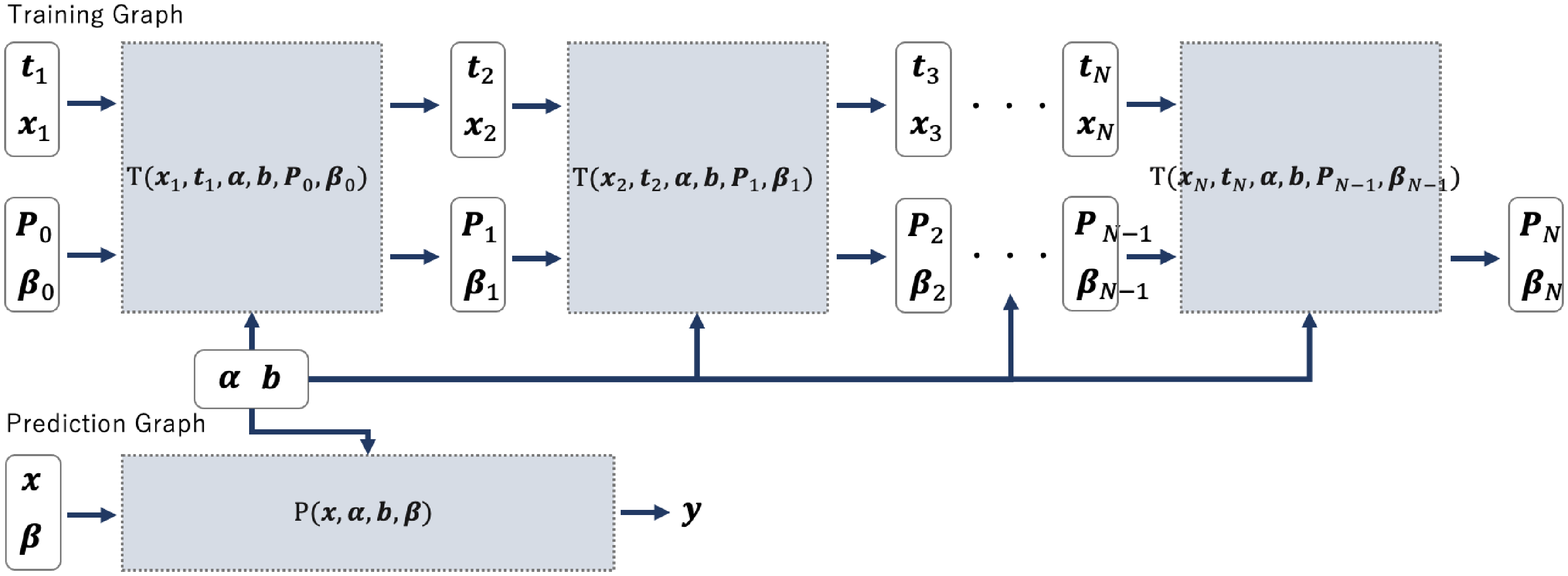}
    \caption{Computation graphs for OS-ELM. $N$ represents the total number of training steps.}
    \label{fig:graph}
\end{figure}
\begin{figure*}[t]
    \centering
    \includegraphics[width=170.0mm]{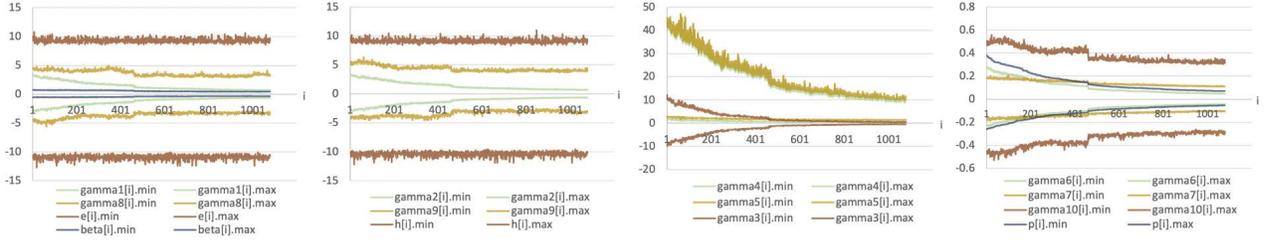}
    \caption{
        Observed intervals of $\{\bm{\gamma}^{(1)}_i, \ldots, \bm{\gamma}^{(10)}_i, \bm{P}_i, \bm{\beta}_i, \bm{e}_i, \bm{h}_i\}$ ($1 \le i \le N = 1,079$) on Digits dataset.
        The x-axis represents the training step $i$, and the y-axis plots the observed intervals (the maximum and minimum values) of each variable at training step $i$.
    }
    \label{fig:digits}
\end{figure*}

$\mathrm{T}(\bm{x}_i, \bm{t}_i, \bm{\alpha}, \bm{b}, \bm{P}_{i-1}, \bm{\beta}_{i-1}) \mapsto \{\bm{P}_i, \bm{\beta}_i\}$
defined in Algorithm \ref{alg:graph.t} represents a sub-graph that computes a single iteration of the OS-ELM training algorithm.
Training graph concatenates $N$ sub-graphs, where $N$ is the total number of training steps.
Training graph takes $\{\bm{x}_1, \ldots, \bm{x}_N, \bm{t}_1, \ldots, \bm{t}_N, \bm{\alpha}, \bm{b}, \bm{P}_0, \bm{\beta}_0\}$ as input and outputs $\{\bm{P}_N, \bm{\beta}_N\}$.
$\mathrm{P}(\bm{x}, \bm{\alpha}, \bm{b}, \bm{\beta}) \mapsto \bm{y}$ defined in Algorithm \ref{alg:graph.p} represents prediction graph.
Prediction graph takes $\{\bm{x}, \bm{\alpha}, \bm{b}, \bm{\beta}\}$ as input and outputs $\bm{y}$.

The goal is to obtain the intervals of $\{\bm{\gamma}_i^{(1)}, \ldots, \bm{\gamma}_i^{(10)}$, $\bm{P}_i$, $\bm{\beta}_i$, $\bm{e}_i, \bm{h}_i\}$ ($1 \le i \le N$) for training graph
and $\{\bm{e}, \bm{h}, \bm{y}\}$ for prediction graph, through AA.
In this paper, the interval of a matrix $\bm{A} \in \mathbb{R}^{u \times v}$ is computed as follows.
\begin{equation}\label{eq:affine.matrix_interval}
    \begin{split}
        \mathrm{interval}(\hat{\bm{A}}) &= [\mathrm{inf}(\hat{\bm{A}}), \mathrm{sup}(\hat{\bm{A}})] \\
        \mathrm{inf}(\hat{\bm{A}}) &= \mathrm{min}(\mathrm{inf}(\hat{A}_{[0,0]}), \ldots, \mathrm{inf}(\hat{A}_{[u-1,v-1]}))\\
        \mathrm{sup}(\hat{\bm{A}}) &= \mathrm{max}(\mathrm{sup}(\hat{A}_{[0,0]}), \ldots, \mathrm{sup}(\hat{A}_{[u-1,v-1]})),
    \end{split}
\end{equation}
where $\hat{\bm{A}}$ is the affine form of $\bm{A}$, and $\hat{A}_{[i,j]}$ is the $ij$ element of $\hat{\bm{A}}$.

\subsection{Constraints}
Remember that all input intervals must be known in AA; in other words the intervals of $\{\bm{x}_1, \ldots, \bm{x}_N$, $\bm{t}_1, \ldots, \bm{t}_N$, $\bm{\alpha}$, $\bm{b}$, $\bm{P}_0$, $\bm{\beta}_0\}$ for training graph
and $\{\bm{x}, \bm{\alpha}, \bm{b}, \bm{\beta}\}$ for prediction graph must be given.
In this work we assume that the intervals of $\{\bm{x}, \bm{x}_1, \ldots, \bm{x}_N, \bm{t}_1, \ldots, \bm{t}_N\}$ are $[0, 1]$,
and those of $\{\bm{\alpha}, \bm{b}\}$ are $[-1, 1]$.
$\{\bm{P}_0, \bm{\beta}_0\}$ is computed by Equation \ref{eq:os_elm.init}.
The interval of $\bm{\beta}$ (an input of prediction graph) is computed in the way described in Section \ref{sec:method.predict_graph}.

\subsection{Interval Analysis for Training Graph}\label{sec:method.train_graph}
The goal of training graph is to find the intervals of $\{\bm{\gamma}_i^{(1)}, \ldots, \bm{\gamma}_i^{(10)}$, $\bm{P}_i$, $\bm{\beta}_i$, $\bm{e}_i, \bm{h}_i\}$ for $1 \le i \le N$,
however, we have to deal with a critical problem; OS-ELM is an online learning algorithm and the total number of training steps $N$ is unknown
as training may occur in runtime (i.e. $N$ can increase in runtime).
if $N$ is unknown, the training graph grows endlessly and interval analysis becomes infeasible.
We need to determine a ``reasonable'' value of $N$ for training graph.

\subsubsection{Determination of $N$}\label{sec:method.train_analysis.n}
To determine $N$, we conducted an experiment to analyze
the intervals of $\{\bm{\gamma}_i^{(1)}, \ldots, \bm{\gamma}_i^{(10)}$, $\bm{P}_i$, $\bm{\beta}_i$, $\bm{e}_i, \bm{h}_i\}$
for $1 \le i \le N$. The procedure is as follows:
\textcircled{\scriptsize{1}} Implement OS-ELM's initialization and training algorithms in double-precision format.
\textcircled{\scriptsize{2}} Compute initialization algorithm using initial training samples of Digits \cite{uciml_digits} dataset (see Table \ref{tb:datasets} for details).
$\{\bm{P}_0, \bm{\beta}_0\}$ is obtained.
\textcircled{\scriptsize{3}} Compute training algorithm \textit{by one step} using online training samples. $\{\bm{P}_k, \bm{\beta}_k\}$ is obtained if $i = k$.
\textcircled{\scriptsize{4}} Generate 1,000 random training samples $\{\bm{x}, \bm{t}\}$ with uniform distribution of [0, 1].
Feed all the random samples into training algorithm of step = $k$
and measure the maximum and minimum values for each of $\{\bm{\gamma}_k^{(1)}, \ldots, \bm{\gamma}_k^{(10)}$, $\bm{P}_k$, $\bm{\beta}_k$, $\bm{e}_k, \bm{h}_k\}$.
\textcircled{\scriptsize{5}} Iterate 3-4 until all online training samples run out.

Figure \ref{fig:digits} shows the result.
We observed that all the intervals gradually converged or kept constant as $i$ proceeds.
Similar outcomes were observed on other datasets too (see Section \ref{sec:eval.hypothesis} for the entire result on multiple datasets).
From these outcomes, we make a hypothesis that $\bm{A}_i \in \{\bm{\gamma}_i^{(1)}, \ldots, \bm{\gamma}_i^{(10)}$, $\bm{P}_i$, $\bm{\beta}_i$, $\bm{e}_i, \bm{h}_i\}$
roughly satisfies $[\mathrm{min}(\bm{A}_1), \mathrm{max}(\bm{A}_1)] \supseteq [\mathrm{min}(\bm{A}_i), \mathrm{max}(\bm{A}_i)]$ for $2 \le i$,
in other words, the interval of $\bm{A}_1$ can be used as those of $\bm{A}_1, \ldots, \bm{A}_N$.
This hypothesis is verified in Section \ref{sec:eval.hypothesis}, using multiple datasets.

Based on the hypothesis we set $N = 1$ in training graph.
The interval analysis method for training graph is summarized as follows.
\begin{enumerate}
    \item Build training graph $\mathrm{T}(\bm{x}_0, \bm{t}_0, \bm{\alpha}, \bm{b}, \bm{P}_0, \bm{\beta}_0) \mapsto \{\bm{P}_1, \bm{\beta}_1\}$.
    \item Compute $\{\hat{\bm{\gamma}}_1^{(1)}, \ldots, \hat{\bm{\gamma}}_1^{(10)}$, $\hat{\bm{P}}_1$, $\hat{\bm{\beta}}_1$, $\hat{\bm{e}}_1$, $\hat{\bm{h}}_1\}$ using AA.
    The intervals are used as those of $\{\bm{\gamma}_i^{(1)}, \ldots, \bm{\gamma}_i^{(10)}$, $\bm{P}_i$, $\bm{\beta}_i$, $\bm{e}_i, \bm{h}_i\}$ ($i \ge 1$).
\end{enumerate}

\subsubsection{Division}\label{sec:method.train_analysis.div}
OS-ELM's training algorithm has a division $\frac{\bm{P}_{i-1}\bm{h}_i^T\bm{h}_i\bm{P}_{i-1}}{1+\bm{h}_i\bm{P}_{i-1}\bm{h}_i^T}$.
As mentioned in Section \ref{sec:prelim.aa}, the denominator $\gamma^{(5)}_i = 1 + \bm{h}_i\bm{P}_{i-1}\bm{h}_i^T$ must not take zero.
In the rest of this section $0 \notin \gamma^{(5)}_i$ is proven for $i \ge 1$.

\begin{theorem}
    $\bm{P}_{i-1}$ is positive-definite for $i \ge 1$.
\end{theorem}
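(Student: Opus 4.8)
The plan is to prove the equivalent statement that $\bm{P}_j$ is positive-definite for every $j \ge 0$, by induction on $j$. The base case $j = 0$ uses the initialization $\bm{P}_0 = (\bm{H}_0^T\bm{H}_0)^{-1}$ from Equation \ref{eq:os_elm.init}: the Gram matrix $\bm{H}_0^T\bm{H}_0$ is positive-semidefinite, and under the standing assumption that $\bm{H}_0$ has full column rank it is positive-definite and invertible. Since the inverse of a positive-definite matrix is again positive-definite, $\bm{P}_0$ is positive-definite.

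For the inductive step, the key observation is that the rank-one update in the training algorithm (Equation \ref{eq:os_elm.train}) is exactly an instance of the Sherman--Morrison identity. Writing $\bm{h}_i^T$ as a column vector and taking $\bm{A} = \bm{P}_{i-1}^{-1}$, the update
\begin{equation*}
    \bm{P}_i = \bm{P}_{i-1} - \frac{\bm{P}_{i-1}\bm{h}_i^T\bm{h}_i\bm{P}_{i-1}}{1+\bm{h}_i\bm{P}_{i-1}\bm{h}_i^T}
\end{equation*}
can be rewritten in the compact closed form
\begin{equation*}
    \bm{P}_i = \left(\bm{P}_{i-1}^{-1} + \bm{h}_i^T\bm{h}_i\right)^{-1}.
\end{equation*}
Assuming $\bm{P}_{i-1}$ is positive-definite, its inverse $\bm{P}_{i-1}^{-1}$ is positive-definite, while $\bm{h}_i^T\bm{h}_i$ is positive-semidefinite as a rank-one Gram matrix. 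The sum of a positive-definite and a positive-semidefinite matrix is positive-definite, and inverting it preserves positive-definiteness, so $\bm{P}_i$ is positive-definite, closing the induction.

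I expect the main obstacle to be justifying the Sherman--Morrison rewriting rigorously, in particular confirming the hypotheses under which it applies: $\bm{P}_{i-1}$ must be invertible (guaranteed by the inductive hypothesis) and the scalar denominator $1 + \bm{h}_i\bm{P}_{i-1}\bm{h}_i^T$ must be nonzero. Notice, however, that once $\bm{P}_{i-1}$ is known to be positive-definite this denominator satisfies $1 + \bm{h}_i\bm{P}_{i-1}\bm{h}_i^T \ge 1 > 0$ automatically, since $\bm{h}_i\bm{P}_{i-1}\bm{h}_i^T$ is a quadratic form of a positive-definite matrix and hence nonnegative. This is convenient, because it means the present theorem directly supplies the $0 \notin \gamma_i^{(5)}$ property that Section \ref{sec:method.train_analysis.div} needs in order to represent the division in affine arithmetic. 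A secondary point to state explicitly is the full-column-rank assumption on $\bm{H}_0$ required for the base case, without which $\bm{P}_0$ would not be well-defined.
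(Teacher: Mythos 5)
Your proposal is correct and follows essentially the same route as the paper: establish positive-definiteness of $\bm{P}_0$ from the Gram-matrix structure and the regularity assumption on $\bm{H}_0^T\bm{H}_0$, then use the Sherman--Morrison identity to write $\bm{P}_i = (\bm{P}_{i-1}^{-1} + \bm{h}_i^T\bm{h}_i)^{-1}$ and propagate positive-definiteness through the sum and inversion. Your version is slightly tighter in that it phrases the ``repeat the above logic'' step as an explicit induction and verifies the Sherman--Morrison hypotheses (invertibility of $\bm{P}_{i-1}$ and nonvanishing of $1 + \bm{h}_i\bm{P}_{i-1}\bm{h}_i^T$) from the inductive hypothesis, a point the paper leaves implicit.
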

\begin{proof}
    We first prove that $\bm{P}_0$ is positive-definite.
    \begin{itemize}
        \item $\bm{P}_0^{-1}$ is positive-semidefinite due to $\bm{u}\bm{P}_0^{-1}\bm{u}^T = \bm{u}\bm{H}_0^T\bm{H}_0\bm{u}^T = (\bm{u}\bm{H}_0^T) \cdot (\bm{u}\bm{H}_0^T)^T \ge 0$,
        where $\bm{u} \in \mathbb{R}^{1 \times \tilde{N}}$ represents an arbitrary vector.
        \item $\bm{P}_0^{-1} = \bm{H}_0^T\bm{H}_0$ is positive-definite since $\bm{H}_0^T\bm{H}_0$ is assumed to be a regular matrix in OS-ELM.
        \item $\bm{P}_0$ is positive-definite since the inverse of a positive-definite matrix is positive-definite.

    \end{itemize}
    Next, we prove that $\bm{P}_1$ is positive-definite.
    Equation \ref{eq:morrison} is derived by applying the sherman-morrison formula\footnote{
        $(\bm{V} + \bm{u}^T\bm{w})^{-1} = \bm{V}^{-1} - \frac{\bm{V}^{-1}\bm{u}^T\bm{w}\bm{V}^{-1}}{1 + \bm{w}\bm{V}^{-1}\bm{u}^T}$ ($\bm{V} \in \mathbb{R}^{k \times k}, \bm{u} \in \mathbb{R}^{1 \times k}, \bm{w} \in \mathbb{R}^{1 \times k}$, $k \in \mathbb{N}$).
    } to Equation \ref{eq:os_elm.train}.
    \begin{equation}\label{eq:morrison}
        \bm{P}_i = (\bm{P}_{i-1}^{-1} + \bm{h}_i^T\bm{h}_i)^{-1}
    \end{equation}
    \begin{itemize}
        \item $\bm{P}_1 = (\bm{P}_{0}^{-1} + \bm{h}_1^T\bm{h}_1)^{-1}$ holds by substituting $i = 1$.
        \item $\bm{h}_1^T\bm{h}_1$ is positive-semidefinite due to $\bm{u}\bm{h}_1^T\bm{h}_1\bm{u}^T = (\bm{u}\bm{h}_1^T) \cdot (\bm{u}\bm{h}_1^T)^T \ge 0$.
        \item $\bm{P}_1^{-1} = (\bm{P}_0^{-1} + \bm{h}_1^T\bm{h}_1)$ is positive-definite
        since it is the sum of a positive-definite matrix $\bm{P}_0^{-1}$ and a positive-semidefinite matrix $\bm{h}_1^T\bm{h}_1$.
        \item $\bm{P}_1$ is positive-definite since it is the inverse of a positive-definite matrix $\bm{P}_1^{-1}$.
    \end{itemize}
    By repeating the above logic, $\bm{P}_0, \ldots, \bm{P}_{i - 1}$ ($i \ge 1$) are all positive-definite.
\end{proof}
\begin{theorem}
    $\bm{h}_i\bm{P}_{i-1}\bm{h}_i^T > 0$ for $i \ge 1$.
\end{theorem}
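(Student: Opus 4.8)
The plan is to derive this immediately from Theorem~1. Recall that a positive-definite matrix $\bm{M}$ satisfies $\bm{u}\bm{M}\bm{u}^T > 0$ for every nonzero row vector $\bm{u}$, and the quadratic form vanishes exactly when $\bm{u} = \bm{0}$. Since Theorem~1 already establishes that $\bm{P}_{i-1}$ is positive-definite for $i \ge 1$, the quantity $\bm{h}_i\bm{P}_{i-1}\bm{h}_i^T$ is automatically nonnegative, and it becomes strictly positive the moment I can guarantee $\bm{h}_i \neq \bm{0}$. So the whole statement reduces to a single nondegeneracy condition on the hidden-layer output.

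First I would isolate the claim $\bm{h}_i \neq \bm{0}$ as the only nontrivial ingredient: once it is in hand, setting $\bm{u} = \bm{h}_i$ in the definition of positive-definiteness closes the argument in one line. To establish $\bm{h}_i \neq \bm{0}$, I would appeal to the activation function, using $\bm{h}_i = G(\bm{x}_i\cdot\bm{\alpha}+\bm{b})$. For the standard sigmoid choice in OS-ELM every component of $G(\cdot)$ lies strictly in $(0,1)$, so $\bm{h}_i$ has all-positive entries and is in particular nonzero; more generally any strictly positive $G$ suffices.

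The main obstacle is precisely this last step. Nonnegativity of the quadratic form is a trivial consequence of Theorem~1, but the strictness required by the statement ($> 0$ rather than merely $\ge 0$) hinges entirely on excluding the degenerate case $\bm{h}_i = \bm{0}$. That exclusion is a property of the activation $G$, not of $\bm{P}_{i-1}$, so I expect the proof to spend its only real effort on justifying $\bm{h}_i \neq \bm{0}$ and then to conclude by invoking positive-definiteness.
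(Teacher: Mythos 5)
Your core argument coincides with the paper's: both invoke Theorem~1 and evaluate the quadratic form of $\bm{P}_{i-1}$ at $\bm{u} = \bm{h}_i$. The difference is that the paper's proof is a one-liner asserting $\bm{u}\bm{V}\bm{u}^T > 0$ for an ``arbitrary'' vector $\bm{u}$ and never mentioning the degenerate case $\bm{u} = \bm{0}$, whereas you correctly isolate $\bm{h}_i \neq \bm{0}$ as the sole condition on which strictness hinges --- a genuine gap in the paper's own argument. Where your fix is shaky relative to this particular paper is the justification of $\bm{h}_i \neq \bm{0}$ itself: you appeal to a sigmoid (or any strictly positive) activation, but the computation graph the paper actually analyzes (Algorithm~1) computes $\bm{h}_i = \bm{x}_i \cdot \bm{\alpha} + \bm{b}$ with no nonlinearity applied, and under the paper's assumed input intervals ($\bm{x}_i \in [0,1]$, $\bm{\alpha}, \bm{b} \in [-1,1]$) the case $\bm{h}_i = \bm{0}$ is not analytically excluded. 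So in the paper's setting your premise can fail, and only $\bm{h}_i\bm{P}_{i-1}\bm{h}_i^T \ge 0$ is guaranteed; your proof is sound exactly when the strictly-positive-activation assumption holds, which is standard for OS-ELM but not what this circuit implements. Note, however, that the downstream purpose of the theorem --- showing $0 \notin \gamma^{(5)}_i = 1 + \bm{h}_i\bm{P}_{i-1}\bm{h}_i^T$ --- survives either way, since nonnegativity of the form already gives $\gamma^{(5)}_i \ge 1$; if you wanted a statement that is airtight without any assumption on $\mathrm{G}$, you would prove that weaker inequality instead, and both your proof and the paper's would reduce to it.
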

\begin{proof}
    An $n \times n$ positive-definite matrix $\bm{V} \in \mathbb{R}^{n \times n}$ satisfies the following inequality.
    \begin{equation}
        \bm{u}\bm{V}\bm{u}^T > 0,
    \end{equation}
    where $\bm{u} \in \mathbb{R}^{1 \times n}$ represents an arbitrary vector.
    By applying this to $\bm{h}_i\bm{P}_{i-1}\bm{h}_i^T$,
    $\bm{h}_i\bm{P}_{i-1}\bm{h}_i^T > 0$ holds for $i \ge 1$,
    which guarantees $0 \notin 1 + \bm{h}_i\bm{P}_{i-1}\bm{h}_i^T \Leftrightarrow 0 \notin \gamma^{(5)}_i$ for $i \ge 1$.
\end{proof}
Note that $\mathrm{interval}(\hat{\gamma}^{(5)}_i)$ can include zero
because $\mathrm{interval}(\hat{\gamma}^{(5)}_i)$ can be wider than the true interval of $\gamma^{(5)}_i$.
To tackle this problem we propose to compute $\mathrm{min}(1, \mathrm{inf}(\hat{\gamma}^{(5)}_i))$ for the lower bound of $\hat{\gamma}^{(5)}_i$ instead of $\mathrm{inf}(\hat{\gamma}^{(5)}_i)$.
This trick prevents $\hat{\gamma^{(5)}_i}$ from including zero and
at the same time makes the interval close to the true interval.
Thanks to this trick OS-ELM's training algorithm can be safely represented in AA.

\subsection{Interval Analysis for Prediction Graph}\label{sec:method.predict_graph}
Prediction graph takes $\{\bm{x}, \bm{\beta}\}$ as input.
The interval of $\bm{\beta}$ should be that of $\bm{\beta}_i$ over $0 \le i \le N$, more specifically, $0 \le i \le 1$ ($N = 1$).
We propose to compute $\mathrm{min}(\mathrm{inf}(\hat{\beta}_{0[u,v]}), \mathrm{inf}(\hat{\beta}_{1[u,v]}))$ for the lower bound of $\beta_{[u,v]}$
and $\mathrm{max}(\mathrm{sup}(\hat{\beta}_{0[u,v]}), \mathrm{sup}(\hat{\beta}_{1[u,v]}))$ as the upper bound,
where $\beta_{[u,v]}$ represents the $uv$ element of $\bm{\beta}$.

\section{OS-ELM Core}\label{sec:impl}
We developed OS-ELM Core, a fixed-point IP core that implements OS-ELM algorithms, to verify the proposed interval analysis method.
All integer bit-widths of OS-ELM Core are parametrized, and the result of proposed interval analysis method is used as the arguments.
The PYNQ-Z1 FPGA board \cite{pynqz1} (280 BRAM blocks, 220 DSP slices, 106,400 flip-flops, and 53,200 LUT instances) is employed as the evaluation platform.

Figure \ref{fig:oselm_core} shows the block diagram of OS-ELM Core.
OS-ELM Core employs axi-stream protocol for input/output interface with 64-bit data width.
Training module executes OS-ELM's training algorithm
then updates $\bm{P}$ and $\bm{\beta}$ managed in parameter buffer.
Prediction module reads an input $\bm{x}$ from input buffer
and executes prediction algorithm.
The output of prediction module $\bm{y}$ is buffered in output buffer.
Both training and prediction modules use one adder and one multiplier
in a matrix product operation, and
one arithmetic unit (i.e. adder, multiplier, or divisor)
in a element-wise operation, regardless of the size of matrix,
to make hardware resource cost as small as possible.
All the arrays existing in OS-ELM Core are implemented with BRAM blocks (18 kb/block),
and all the fixed-point arithmetic units (i.e. adder, multiplier, and divisor) are with DSP slices.
\begin{figure}[b]
    \centering
    \includegraphics[width=80.0mm]{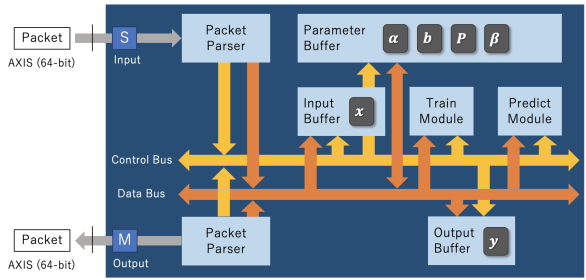}
    \caption{Block diagram of OS-ELM Core.}
    \label{fig:oselm_core}
\end{figure}

\begin{table}[t]
    \centering
    \caption{
        Classification datasets used in Section \ref{sec:eval}.
        ``Initial training samples' refers to the training samples used for computing $\{\bm{\beta}_0, \bm{P}_0\}$.
        ``Online training samples'' are the training samples for computing $\{\bm{\beta}_i, \bm{P}_i\}$ $(i \ge 1)$.
        ``Test samples'' are used to evaluate test accuracy and determine the number of hidden nodes.
        ``Features'' is the number of dimension of input $\bm{x}$.
        ``Classes'' corresponds to the number of output classes (= the number of dimension of output $\bm{y}$ and target $\bm{t}$).
        ``Model size'' column shows the model size $\{n, \tilde{N}, m\}$ for each dataset, where $n$, $\tilde{N}$, or $m$ represents
        the number of input, hidden, or output nodes.
    }
    \label{tb:datasets}
    \small
    \begin{tabular}{c|c|c|c|c|c|c}\hline\hline
    Name & Initial training samples & Online training samples & Test samples & Features & Classes & Model size\\\hline
    Digits \cite{uciml_digits} & 358 & 1,079 & 360 & 64 & 10 & \{64, 48, 10\}\\
    Iris \cite{uciml_iris} & 30 & 90 & 30 & 4 & 3 & \{4, 5, 3\}\\
    Letter \cite{uciml_letter} & 4,000 & 12,000 & 4,000 & 16 & 26 & \{16, 32, 26\}\\
    Credit \cite{uciml_credit} & 6,000 & 18,000 & 6,000 & 23 & 2 & \{23, 16, 2\}\\
    Drive \cite{uciml_drive} & 11,701 & 35,106 & 11,702 & 48 & 11 & \{48, 64, 11\}\\\hline
    \end{tabular}
\end{table}
\begin{table}[t]
    \centering
    \caption{Intervals obtained from simulation (sim) and the proposed interval analysis method (ours) for each dataset.}
    \label{tb:check}
    \scriptsize
    \begin{tabular}{c|c|c|c|c|c}\hline\hline
    & $\bm{\gamma}^{(1)}_i$ & $\bm{\gamma}^{(2)}_i$ & $\bm{\gamma}^{(3)}_i$ & $\bm{\gamma}^{(4)}_i$ & $\bm{\gamma}^{(5)}_i$\\\hline
    Digits (sim) & $[-0.642, 0.694]$ & $[-0.642, 0.694]$ & $[-0.446, 0.482]$ & $[0.371, 9.75]$ & $[1.37, 10.7]$\\
    Digits (ours) & $[-9.92e^{3}, 9.91e^{3}]$ & $[-9.26, 9.69]$ & $[-24.5, 27.8]$ & $[0.0, 1.46e^{3}]$ & $[1.0, 1.46e^{3}]$\\\hline
    Iris (sim) & $[-5.94, 5.85]$ & $[-5.94, 5.85]$ & $[-4.89, 35.3]$ & $[9.27e^{-3}, 3.24]$ & $[1.01, 4.24]$\\
    Iris (ours) & $[-1.55e^{3}, 1.55e^{3}]$ & $[-63.5, 19.1]$ & $[-388, 388]$ & $[0.0, 48.0]$ & $[1.0, 41.7]$\\\hline
    Letter (sim) & $[-6.72e^{-3}, 7.54e^{-3}]$ & $[-6.72e^{-3}, 7.54e^{-3}]$ & $[-5.06e^{-5}, 5.68e^{-5}]$ & $[2.79e^{-3}, 0.0397]$ & $[1.0, 1.04]$\\
    Letter (ours) & $[-0.301, 0.307]$ & $[-0.0593, 0.0785]$ & $[-2.42e^{-3}, 2.44e^{-3}]$ & $[0.0, 3.49]$ & $[1.0, 4.49]$\\\hline
    Credit (sim) & $[-0.115, 0.116]$ & $[-0.115, 0.116]$ & $[-8.36e^{-3}, 0.0135]$ & $[5.89e^{-3}, 0.253]$ & $[1.01, 1.25]$\\
    Credit (ours) & $[-32.9, 32.9]$ & $[-2.22, 3.25]$ & $[-0.589, 0.589]$ & $[0.0, 32.4]$ & $[1.0, 33.4]$\\\hline
    Drive (sim) & $[-6.97e^{5}, 6.92e^{5}]$ & $[-6.98e^{5}, 6.92e^{5}]$ & $[-3.71e^{11}, 4.87e^{11}]$ & $[5.26e^{4}, 4.72e^{6}]$ & $[5.26e^{4}, 4.72e^{6}]$\\
    Drive (ours) & $[-6.56e^{15}, 6.56e^{15}]$ & $[-1.33e^{7}, 1.56e^{7}]$ & $[-1.4e^{13}, 1.4e^{13}]$ & $[0.0, 1.55e^{9}]$ & $[1.0, 1.55e^{9}]$ \\\hline
    & $\bm{\gamma}^{(6)}_i$ & $\bm{\gamma}^{(7)}_i$ & $\bm{\gamma}^{(8)}_i$ & $\bm{\gamma}^{(9)}_i$ & $\bm{\gamma}^{(10)}_i$\\\hline
    Digits (sim) & $[-0.0447, 0.0472]$ & $[-0.102, 0.109]$ & $[-3.25, 3.29]$ & $[-3.0, 3.94]$ & $[-0.291, 0.306]$\\
    Digits (ours) & $[-25.8, 27.8]$ & $[-9.92e^{3}, 9.91e^{3}]$ & $[-12.1, 15.4]$ & $[-8.38, 9.0]$ & $[-8.93e^{4}, 8.93e^{4}]$\\\hline
    Iris (sim) & $[-1.32, 8.32]$ & $[-1.68, 1.67]$ & $[-1.24, 1.69]$ & $[-1.5, 2.12]$ & $[-2.1, 2.77]$\\
    Iris (ours) & $[-397, 397]$ & $[-1.55e^{3}, 1.55e^{3}]$ & $[-2.61, 2.3]$ & $[-2.3, 2.84]$ & $[-4.4e^{3}, 4.4e^{3}]$\\\hline
    Letter (sim) & $[-4.87e^{-5}, 5.46e^{-5}]$ & $[-6.47e^{-3}, 7.25e^{-3}]$ & $[-1.29, 1.03]$ & $[-0.869, 2.21]$ & $[-0.0104, 0.0129]$\\
    Letter (ours) & $[-2.84e^{-3}, 2.86e^{-3}]$ & $[-0.301, 0.307]$ & $[-3.11, 2.02]$ & $[-1.87, 3.31]$ & $[-1.01, 1.01]$\\\hline
    Credit (sim) & $[-7.11e^{-3}, 0.0115]$ & $[-0.0994, 0.0989]$ & $[-2.19, 3.9]$ & $[-3.89, 3.03]$ & $[-0.314, 0.245]$ \\
    Credit (ours) & $[-0.606, 0.606]$ & $[-32.9, 32.9]$ & $[-11.5, 10.7]$ & $[-6.25, 5.62]$ & $[-206, 206]$ \\\hline
    Drive (sim) & $[-1.36e^{5}, 1.65e^{5}]$ & $[-1.55, 1.39]$ & $[-962, 1.01e^{3}]$ & $[-1.01e^{3}, 970]$ & $[-345, 308]$\\
    Drive (ours) & $[-1.4e^{13}, 1.4e^{13}]$ & $[-6.56e^{15}, 6.56e^{15}]$ & $[-1e^{4}, 8.36e^{3}]$ & $[-3.42e^{3}, 3.44e^{3}]$ & $[-2.26e^{19}, 2.26e^{19}]$\\\hline
    & $\bm{P}_i$ & $\bm{\beta}_i$ & $\bm{e}_i$ & $\bm{h}_i$ & $\bm{y}$\\\hline
    Digits (sim) & $[-0.0544, 0.0705]$ & $[-0.351, 0.451]$ & $[-10.6, 9.15]$ & $[-10.0, 9.19]$ & $[-3.16, 3.25]$\\
    Digits (ours) & $[-27.4, 26.2]$ & $[-8.93e^{4}, 8.93e^{4}]$ & $[-23.1, 20.1]$ & $[-22.5, 20.8]$ & $[-3.39e^{7}, 3.39e^{7}]$\\\hline
    Iris (sim) & $[-1.72, 11.4]$ & $[-3.44, 5.32]$ & $[-2.44, 1.41]$ & $[-3.0, 2.21]$ & $[-1.23, 1.79]$\\
    Iris (ours) & $[-358, 435]$ & $[-4.4e^{3}, 4.4e^{3}]$ & $[-2.53, 1.58]$ & $[-3.1, 2.38]$ & $[-1.71e^{4}, 1.71e^{4}]$\\\hline
    Letter (sim) & $[-1.66e^{-3}, 2.45e^{-3}]$ & $[-0.34, 0.294]$ & $[-4.6, 5.33]$ & $[-4.86, 6.01]$ & $[-1.25, 1.18]$\\
    Letter (ours) & $[-9.2e^{-3}, 0.0126]$ & $[-1.35, 0.99]$ & $[-6.6, 7.8]$ & $[-6.87, 8.48]$ & $[-95.7, 95.3]$\\\hline
    Credit (sim) & $[-0.0649, 0.115]$ & $[-1.83, 1.38]$ & $[-4.66, 5.5]$ & $[-5.55, 6.22]$ & $[-2.18, 3.77]$ \\
    Credit (ours) & $[-0.625, 1.05]$ & $[-204, 208]$ & $[-8.29, 9.66]$ & $[-9.19, 10.4]$ & $[-1.09e^{4}, 1.09e^{4}]$ \\\hline
    Drive (sim) & $[-1.4e^{5}, 1.7e^{5}]$ & $[-317, 318]$ & $[-9.9, 7.42]$ & $[-9.35, 8.29]$ & $[-1.21e^{3}, 318]$ \\
    Drive (ours) & $[-1.4e^{13}, 1.4e^{13}]$ & $[-2.26e^{19}, 2.26e^{19}]$ & $[-18.3, 16.8]$ & $[-17.7, 16.0]$ & $[-1.06e^{22}, 1.06e^{22}]$\\\hline
    \end{tabular}
\end{table}
\section{Evaluation}\label{sec:eval}
In this section we evaluate the proposed interval analysis method.
All the experiments here were executed on a server machine (Ubuntu 20.04, Intel Xeon E5-1650 3.60GHz, DRAM 64GB, SSD 500GB).
Table \ref{tb:datasets} lists the classification datasets used for evaluation of our method.
For all the datasets, the intervals of input $\bm{x}$ and target $\bm{t}$ are normalized into $[0, 1]$.
Parameters $\bm{b}$ and $\bm{\alpha}$ are randomly generated with the uniform distribution of $[-1, 1]$.
The model size for each dataset is shown in ``Model Size'' column.
The number of hidden nodes is set to the number that performed the best test accuracy in a given search space;
search spaces for Digits, Iris, Letter, Credit, and Drive are \{32, 48, 64, 96, 128\}, \{3, 4, 5, 6, 7\}, \{8, 16, 32, 64, 128\},
\{4, 8, 16, 32, 64\}, and \{32, 64, 96, 128\} respectively.
\begin{comment}
We developed a software twin of OS-ELM Core in double-precision format
to simulate the behavior of OS-ELM Core in infinite-precision.
``Software training module" refers to the software version of training module,
and ``software prediction module" is the software version of prediction module.
We developed OS-ELM's initialization algorithm in double-precision format
and it is referred to as ``software initialization algorithm".
\end{comment}
\begin{table}[b]
    \centering
    \caption{
        The ``Ops'' column shows the total number of arithmetic operations,
        and the ``Overflow/Underflow'' column shows the number of overflow or underflows that happened during the experiment.
        The rate of overflow/underflows is written in ().}
    \label{tb:rate}
    \begin{tabular}{c|c|c}\hline\hline
        & Ops & Overflow/Underflow \\\hline
        Digits (sim) & \multirow{2}{*}{5,512,688,688} & 0 \\
        Digits (ours) & & 0 \\\hline
        Iris (sim) & \multirow{2}{*}{4,714,041} & 197,342 (4.19\%) \\
        Iris (ours) & & 0 \\\hline
        Letter (sim) & \multirow{2}{*}{17,793,216,000} & 0 \\
        Letter (ours) & & 0 \\\hline
        Credit (sim) & \multirow{2}{*}{11,039,328,000} & 0 \\
        Credit (ours) & & 0 \\\hline
        Drive (sim) & \multirow{2}{*}{187,259,827,356} & 5,467,945,469 (2.92\%) \\
        Drive (ours) & & 0 \\\hline
    \end{tabular}
\end{table}

\subsection{Optimization Result}\label{sec:eval.check}
In this section we first show the result of the proposed interval analysis method for each dataset,
comparing with an ordinary simulation-based interval analysis method.
Here is a brief introduction of the simulation method:
\textcircled{\scriptsize{1}} Implement OS-ELM's initialization, prediction, and training algorithms in double-precision format.
\textcircled{\scriptsize{2}} Execute initialization algorithm using initial training samples. $\{\bm{P}_0, \bm{\beta}_0\}$ is obtained.
\textcircled{\scriptsize{3}} Execute training algorithm by one step using online training samples. $\{\bm{P}_k, \bm{\beta}_k\}$ is obtained if $i = k$.
\textcircled{\scriptsize{4}} Generate 1,000 random training samples $\{\bm{x}, \bm{t}\}$ with uniform distribution of $[0, 1]$.
\textcircled{\scriptsize{5}} Feed all the random samples into training algorithm of step = $k$ and measure the values of $\{\bm{\gamma}_k^{(1)}, \ldots, \bm{\gamma}_k^{(10)}, \bm{P}_k, \bm{\beta}_k, \bm{e}_k, \bm{h}_k\}$.
\textcircled{\scriptsize{6}} Feed all the random samples into prediction algorithm and measure the values of $\bm{y}$.
\textcircled{\scriptsize{7}} Repeat 3-6 until all online training samples run out.

Table \ref{tb:check} shows the intervals obtained from the simulation method (sim)
and those from the proposed method (ours).
All the intervals obtained from our method cover the corresponding simulated interval.
Note that the simulated interval of $\bm{\gamma}_i^{(5)} = 1 + \bm{h}_i\bm{P}_{i-1}\bm{h}_i^T$ satisfies $\bm{\gamma}_i^{(5)} > 1$,
which is consistent with the theorem proven in Section \ref{sec:method.train_analysis.div}.

\subsection{Rate of Overflow/Underflows}\label{sec:eval.rate}
This section compares the simulation method introduced in Section \ref{sec:eval.check} and
the proposed method in terms of the rate of overflow/underflows, using OS-ELM Core.
The experimental procedure is as follows:
\textcircled{\scriptsize{1}} Execute the simulation method and convert the result into integer bit-widths using Equation \ref{eq:optim}
(an extra bit was added to each bit-width to reduce overflow/underflows).
\textcircled{\scriptsize{2}} Execute the proposed method and convert the result into bit-widths.
\textcircled{\scriptsize{3}} Synthesize two OS-ELM Cores using the bit-widths obtained from 1 and 2.
\textcircled{\scriptsize{4}} Execute training by one step in both OS-ELM Cores using online training samples.
\textcircled{\scriptsize{5}} Generate 250 random training samples $\{\bm{x}, \bm{t}\}$ with uniform distribution of $[0, 1]$.
\textcircled{\scriptsize{6}} Feed all the random samples into the training module and the prediction module for each OS-ELM Core and check the number of overflow/underflows that arose.
\textcircled{\scriptsize{7}} Repeat 4-6 until all online training samples run out.

The result is shown in Table \ref{tb:rate}.
The simulation method caused no overflow or underflows in three datasets out of five, however,
it suffered from as many overflow/underflows as 2.92 $\sim$ 4.19\% in the other two datasets,
where a few overflow/underflows arose in an early training step and were propagated to subsequent steps,
resulting in a drastic increase in overflow/underflows.
This cannot be perfectly prevented as long as a random exploration is taken in interval analysis.
The proposed method, on the other hand, encountered totally no overflow or underflows
as it analytically derives upper and lower bounds of variables and computes sufficient bit-widths where no overflow or underflows can happen.
Although the proposed method produces some redundant bits and it results in a larger area size (see Section \ref{sec:eval.area}),
it safely realizes an overflow/underflow-free fixed-point OS-ELM circuit.

\subsection{Verification of Hypothesis}\label{sec:eval.hypothesis}
\begin{figure*}[t]
    \begin{tabular}{c}
        \begin{minipage}{0.25\hsize}
            \centering
            \includegraphics[width=170.0mm]{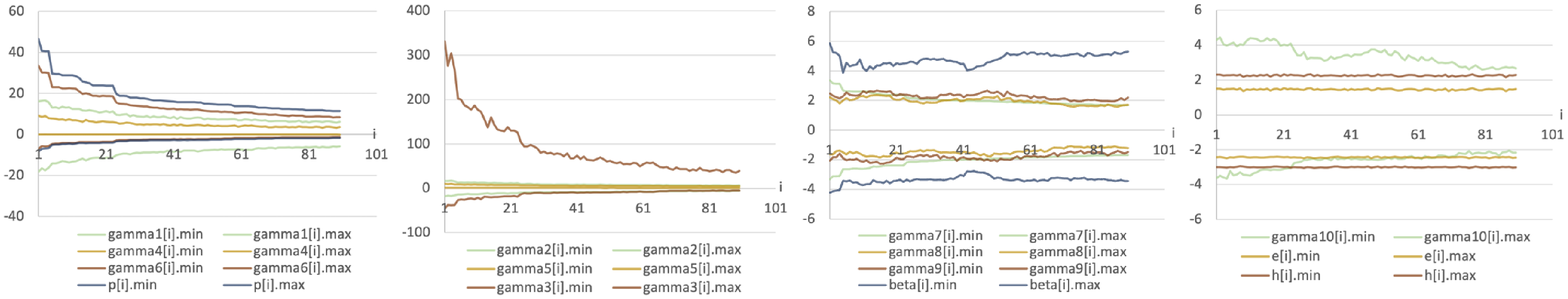}
        \end{minipage} \\
        \begin{minipage}{0.25\hsize}
            \centering
            \includegraphics[width=170.0mm]{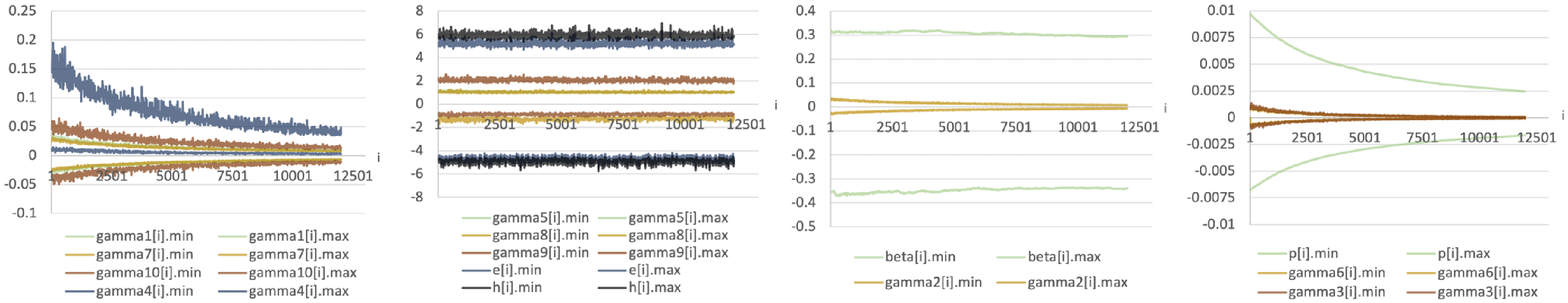}
        \end{minipage} \\
        \begin{minipage}{0.25\hsize}
            \centering
            \includegraphics[width=170.0mm]{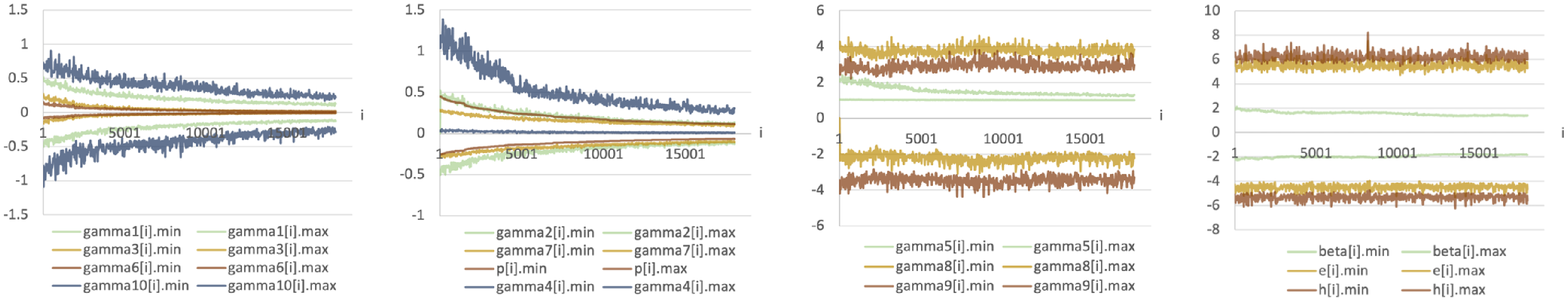}
        \end{minipage} \\
        \begin{minipage}{0.25\hsize}
            \centering
            \includegraphics[width=170.0mm]{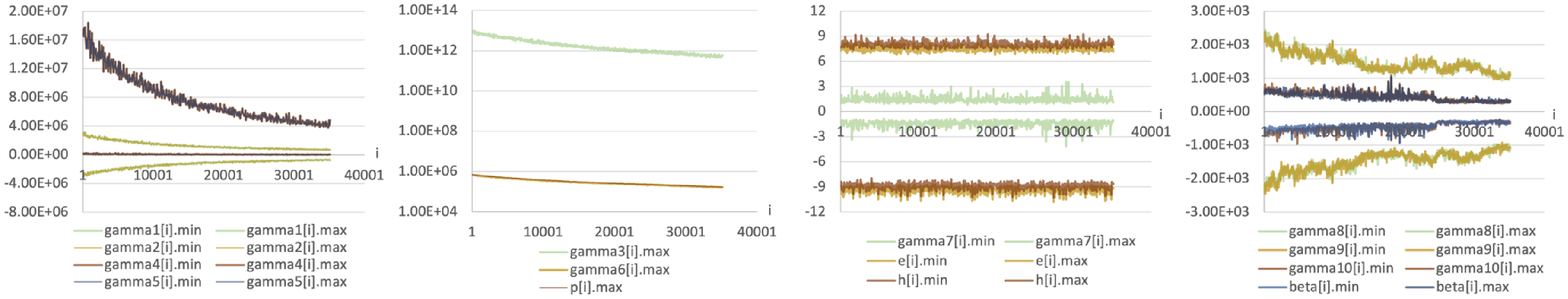}
        \end{minipage}
    \end{tabular}
    \caption{Observed intervals of $\{\bm{\gamma}^{(1)}_i, \ldots, \bm{\gamma}^{(10)}_i, \bm{P}_i, \bm{\beta}_i, \bm{e}_i, \bm{h}_i\}$ on Iris (top row), Letter (2nd row), Credit (3rd row), and Drive (bottom row), respectively.}
    \label{fig:hypothesis}
\end{figure*}
Figure \ref{fig:hypothesis} shows the entire result of the experiment described in Section \ref{sec:method.train_analysis.n}.
We observed similar outcomes to Figure \ref{fig:digits} for all the datasets,
which supports our hypothesis that $\bm{A}_i \in \{\bm{\gamma}_i^{(1)}, \ldots, \bm{\gamma}_i^{(10)}$, $\bm{P}_i$, $\bm{\beta}_i$, $\bm{e}_i, \bm{h}_i\}$
roughly satisfies $[\mathrm{min}(\bm{A}_1), \mathrm{max}(\bm{A}_1)] \supseteq [\mathrm{min}(\bm{A}_i), \mathrm{max}(\bm{A}_i)]$ for $2 \le i$.

In iterative learning algorithms it is known that learning parameters ($\bm{\beta}_i$ and $\bm{P}_i$ in the case of OS-ELM) gradually converge to some values as training proceeds.
We consider that this numerical property resulted in the convergence of the dynamic ranges of $\bm{\beta}_i$ and $\bm{P}_i$ as observed in Figure \ref{fig:hypothesis},
then it tightened the dynamic ranges of other variables (e.g. $\bm{\gamma}_i^{(1)}, \ldots, \bm{\gamma}_i^{(10)}$) too, as a side-effect via enormous number of multiplications existing in the OS-ELM algorithm.
We plan to investigate the hypothesis either by deriving an analytical proof or using a larger dataset in the future work.

\subsection{Area Cost}\label{sec:eval.area}
In this section the proposed method is evaluated in terms of area cost.
We refer to BRAM utilization of OS-ELM Core as ``area cost'',
considering that all the arrays in OS-ELM Core are implemented with BRAM blocks (i.e. the bottleneck of area cost is BRAM utilization).
The proposed method is compared with the simulation method introduced in Section \ref{sec:eval.check}
to clarify how much additional area cost arises to guarantee OS-ELM Core being overflow/underflow-free.
The experimental procedure is as follows:
\textcircled{\scriptsize{1}} Convert the simulation result into integer bit-widths using Equation \ref{eq:optim} and synthesize OS-ELM Core with the optimized bit-widths.
\textcircled{\scriptsize{2}} Execute the proposed interval analysis method. Convert the result into integer bit-widths and synthesize OS-ELM Core.
\textcircled{\scriptsize{3}} Check the BRAM utilizations of our method and the simulation method.
\textcircled{\scriptsize{4}} Repeat 1-3 for all the datasets.
\begin{figure}[t]
    \centering
    \includegraphics[width=80.0mm]{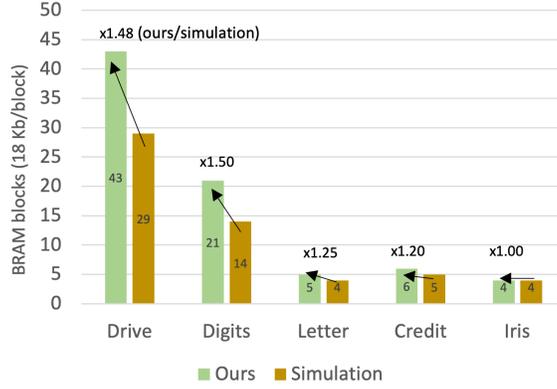}
    \caption{
        Comparison of area cost.
        The green bar represents the BRAM utilization of our method and the brown bar is of the simulation method.}
    \label{fig:area}
\end{figure}

The experimental result is shown in Figure \ref{fig:area}.
Our method requires 1.0x - 1.5x more BRAM blocks
to guarantee that OS-ELM Core never encounter overflow and underflow, compared to the simulation method.

Remember that a multiplication in AA causes overestimation of interval;
there should be a strong correlation between the additional area cost (i.e. simulation - ours)
and the number of multiplications in OS-ELM's training and prediction algorithms.
\begin{equation}\label{eq:mult}
    \mathrm{M}(n, \tilde{N}, m) = 4\tilde{N}^{2} + (3m + n + 1)\tilde{N}
\end{equation}
$\mathrm{M}(n, \tilde{N}, m)$ calculates the total number of multiplications in OS-ELM's training and prediction algorithms,
where $n$, $\tilde{N}$, or $m$ is the number of input, hidden, or output nodes, respectively.
Equation \ref{eq:mult} shows that $\tilde{N}$ has the largest impact on additional area cost,
which is consistent with the result that 2.0x more additional area cost was observed in Drive
compared to Digits, with fewer inputs nodes (Drive: 48, Digits: 64), more hidden nodes (Drive: 64, Digits: 48),
and almost the same number of hidden nodes (Drive: 11, Digits: 10).
We conclude that the proposed method is highly effective especially when the model size is small,
and that the number of hidden nodes has the strongest impact on additional area cost.

\section{Conclusion}\label{sec:conc}
In this paper we proposed an overflow/underflow-free bit-width optimization method for fixed-point OS-ELM digital circuits.
In the proposed method affine arithmetic is used to estimate the intervals of intermediate variables
and compute the optimal number of integer bits that never cause overflow and underflow.
We clarified two critical problems in realizing the proposed method:
(1) OS-ELM's training algorithm is an iterative algorithm and the computation graph grows endlessly,
which makes interval analysis infeasible in affine arithmetic.
(2) OS-ELM's training algorithm has a division operation
and if the denominator can take zero
OS-ELM can not be represented in affine arithmetic.

We proposed an empirical solution to prevent the computation graph from growing endlessly, based on simulation results.
We also analytically proved that the denominator does not take zero at any training step,
and proposed a mathematical trick based of the proof to safely represent OS-ELM in affine arithmetic.
Experimental results confirmed that no underflow/overflow occurred in our method on multiple datasets.
Our method realized overflow/underflow-free OS-ELM digital circuits
with 1.0x - 1.5x more area cost compared to the baseline simulation method
where overflow or underflow can happen.
\begin{table}[b]
    \centering
    \caption{
        Notation rules in this paper.
    }
    \label{tb:notation}
    \begin{tabular}{c|c}\hline\hline
    Notation & Description \\\hline
    $x$ (\textit{italic}) & Scaler. \\
    $\hat{x}$ & Affine form of $x$. \\
    $\bm{x}$ (\textbf{\textit{bold italic}}) & Vector or matrix. \\
    $\hat{\bm{x}}$ & \begin{tabular}{c} Affine form of $\bm{x}$ \\ (see Equation \ref{eq:affine.matrix_interval} for details). \end{tabular}\\
    $x_{[u,v]}$ & $uv$ element of $\bm{x}$. \\
    $\hat{x}_{[u,v]}$ & Affine form for the $uv$ element of $\bm{x}$.\\
    $\mathrm{f}$ (upright) & Function (e.g. $\mathrm{G}, \mathrm{sup}, \mathrm{inf}, \mathrm{interval}$).  \\\hline
    \end{tabular}
\end{table}
\begin{table*}[t]
    \centering
    \caption{
        Description of variables that appear in this paper.
        The characters used for these variables are the same as the ones used in \cite{os_elm}.
    }
    \label{tb:desc}
    \small
    \begin{tabular}{c|c}\hline\hline
    Variable & Description \\\hline
    $n, \tilde{N}, m \in \mathbb{N}$ & Number of input, hidden, or output nodes of OS-ELM. \\
    $\bm{\alpha} \in \mathbb{R}^{n \times \tilde{N}}$ & Non-trainable weight matrix connecting the input and hidden layers, which is initialized with random values. \\
    $\bm{\beta} \in \mathbb{R}^{\tilde{N} \times m}$ & Trainable weight matrix connecting the hidden and output layers. \\
    $\bm{P} \in \mathbb{R}^{\tilde{N} \times \tilde{N}}$ & Trainable intermediate weight matrix for training $\bm{\beta}$. \\
    $\bm{b} \in \mathbb{R}^{1 \times \tilde{N}}$ & Non-trainable bias vector of the hidden layer, which is initialized with random values. \\
    $\mathrm{G}$ & Activation function applied to the hidden layer output. \\
    $\bm{x} \in \mathbb{R}^{1 \times n}$ & Input vector. \\
    $\bm{t} \in \mathbb{R}^{1 \times m}$ & Target vector. \\
    $\bm{y} \in \mathbb{R}^{1 \times m}$ & Output vector. \\
    $\bm{h} \in \mathbb{R}^{1 \times \tilde{N}}$ & Output vector of the hidden layer (after activation). \\
    $\bm{e} \in \mathbb{R}^{1 \times \tilde{N}}$ & Output vector of the hidden layer (before activation). \\
    $\bm{X} \in \mathbb{R}^{k \times n}$ & Input matrix of batch size = $k$ ($k \in \mathbb{N}$). \\
    $\bm{T} \in \mathbb{R}^{k \times m}$ & Target matrix of batch size = $k$. \\
    $\bm{Y} \in \mathbb{R}^{k \times m}$ & Output matrix of batch size = $k$. \\
    $\bm{H} \in \mathbb{R}^{k \times \tilde{N}}$ & Output matrix of the hidden layer with batch size = $k$ (after activation). \\
    $\bm{\gamma}^{(1)}, \ldots,\bm{\gamma}^{(10)}$ & Intermediate variables that appear in OS-ELM's training algorithm. \\\hline
    \end{tabular}
\end{table*}

% Note for arxiv submission
% \input{./main.bbl}
% 1. Generate main.bbl (pbibtex main)
% 2. Replace \biblio~* with \input{./main.bbl}

% reference
\bibliographystyle{unsrt}

\end{document}